\documentclass[12pt]{article}
\usepackage{amsmath,amssymb,amsbsy,amsthm}

\let\epsilon\varepsilon
\let\phi\varphi

\def\N{\mathbb N}
\def\R{\mathbb R}

\newtheorem{theorem}{Theorem}

\newtheorem{remark}{Remark}
\newtheorem{proposition}{Proposition}

\newtheorem{definition}{Definition}

\begin{document}
\title{Using Data Compressors to Construct Rank Tests}
\date{}
\author{Daniil Ryabko, J\"urgen Schmidhuber\\
\normalsize IDSIA, Galleria 2, CH-6928\ Manno-Lugano, Switzerland \\
\normalsize \{daniil, juergen\}@idsia.ch}

\maketitle
\begin{abstract}
Nonparametric rank tests for homogeneity and component independence are proposed, which are based on data compressors. 
For homogeneity testing the idea is to compress the binary string obtained by ordering the two joint samples 
and writing 0 if the element is from the first sample and 1 if it is from the second sample and breaking ties by randomization (extension 
to the case of multiple samples is straightforward).
$H_0$ should be rejected if the string is compressed (to a certain degree) and accepted otherwise. We show that such a test
obtained from an ideal data compressor is valid against all alternatives.

Component independence is reduced to homogeneity testing by constructing two samples, one of which is the first half of the original
and the other is the second half with one of the components randomly permuted.
\end{abstract}

\section{Introduction}
We consider two classical problems of mathematical statistics.
The first one is homogeneity testing:
two (or more; see below) samples $X_1,\dots X_n$ and $Y_1,\dots,Y_n$ with elements in $\R$ are given. 
It is assumed that the elements are drawn independently and within samples the distribution is the same.
We want to test the hypothesis $H_0$ that $X_i$ and $Y_i$ are distributed according to the same
distribution versus $H_1$ that the distributions generating the samples are different. This is called homogeneity testing. 
Absolutely no assumptions are made on the distributions.

The second one is component independence: a sample $Z_1,\dots, Z_n$ is given, generated i.i.d. according to some
distribution $F_Z$.   Each element $Z_i$ consists of two (or more) components $Z^1_i$ and $Z^2_i$. We wish 
to test whether the components are independent of each other. That is, $H_0$ is that the marginal distributions
are independent whereas $H_1$ is that there is some dependency. Again, no assumption is made on the distribution $F_Z$. 
 
Both problems are well-known problems of nonparametric mathematical statistics.
For example,  a classical test for homogeneity is Kolmogorov-Smirnov test (which assumes, however, that
the distributions generating the samples are continuous). There are many other nonparametric tests; some of 
the tests use {\em ranks} of elements within the joint sample, instead of using the actual samples. Such is,
for example, Wilcoxon's test, see \cite{leh} for an overview  (which also makes some additional assumptions on the distribution). 

In this work we present simple nonparametric (distribution-free) rank tests for homogeneity and component independence based
on data compressors.

The idea to use real-life data compressors for testing classical statistical hypotheses,
such as homogeneity,  component independence and some others, was suggested in \cite{r1,r2}.
In these works statistical tests based on data compressors  are constructed which fall into 
the classical framework of nonparametric mathematical statistics, in particular, the Type I error is fixed 
while Type II error goes to 0 under a wide range of alternatives. The hypotheses considered there 
mostly concern data samples drawn from {\em discrete} (e.g.\  finite) spaces. Some tests for
continuous spaces are also proposed based on partitioning. Here we extend this approach to {\em rank}
tests, allowing testing homogeneity and component independence without the need of partitioning the sample spaces and making them finite.
The idea of using data compressors for tasks other than actual data compression 
was suggested in \cite{v1,v2,v3}, where data compressors are applied to such tasks as classification and clustering.
These works were largely inspired by Kolmogorov complexity, which is also an important tool for the present work.

An ``ideal'' data compressor is the one that compresses its input up to its Kolmogorov complexity. 
This is intuitively obvious since, informally, Kolmogorov complexity of a string is the length of the shortest 
program that outputs this string.  Such data compressors do not exist; in particular, Kolmogorov complexity itself
is incomputable. Real data compressors, however, can be considered as approximations of ideal ones.

In this work we provide a simple empirical procedure for testing homogeneity and component independence 
with data compressors; we show that for an ideal data compressor this  procedure 
provides a statistical test  which is valid against all alternatives (Type II error goes to zero);
while Type I error is guaranteed to be below a pre-defined level (so-called significance level) for all 
data compressors, not only for ideal ones. 
It should also be noted that the theoretical assumption underlying  data compressors used in real life is that the data to  compress is {\em stationary}.
Thus the tests designed in \cite{r1,r2} are provably valid against any stationary and ergodic alternative, while 
these tests are based on real data compressors,  not only on ideal ones.
In our case, the alternative arising in rank test under $H_1$ is not stationary. Thus we prove theorems only about ideal data compressors, 
and real data compressors can be used heuristically. However, it can be conjectured that the same results
can be proven for some particular real-life data compressors, for example for those which are based on the measure 
$R$ from \cite{r88} or on the LZ algorithm \cite{lz}.

\section{Homogeneity testing}
Homogeneity testing is the following task. Let there be given two samples $X=\{X_1,\dots X_m\}$ and $Y=\{Y_1,\dots,Y_k\}$ 
(the case of more than two samples will also be considered).
$X_i$ are drawn independently according to some probability distribution $F_X$ on $\R^d$ ($d\in\N$) and 
$Y_i$ are drawn independently from each other and from $X_i$ according to some distribution $F_Y$ on $\R^d$.
The goal is to test whether $F_X = F_Y$. No assumption is made on the distributions 
$F_X$ and $F_Y$; we only assume that $X_i$ and $Y_i$ are drawn independently within the samples and jointly.
So, we wish to test the hypothesis $H_0= \{ (F_X,F_Y): F_X= F_Y \}$  against $H_1= \{ (F_X,F_Y): F_X\ne F_Y \}$.

A {\em code} $\phi$ is a function $\phi: B^*\rightarrow B^*$ from the set of all finite words over binary  alphabet $B=\{0,1\}$ to itself,
such that $\phi$ is an injection (that is, $a\ne b$ implies $\phi(a)\ne \phi(b)$ for $a,b\in B^*$).
A trivial example of a code is the identity $\phi_{id} (a) = a$. Less trivial examples that we have 
in mind are data compressors, such as  \texttt{zip, rar, arj,} or others, which take a word and output a ``compressed'' version of it 
(which in fact is often longer than the original)  
from which the original input can always be recovered.
We will construct (reasonable) tests for homogeneity  from (good) data compressors.

First let us  assume that $\bf d=1$ (that is, $X_i, Y_i\in\R$). 
Let $Z_1\le Z_2 \le \dots\le Z_{m+k}$ denote the joint sample 
constructed by ordering jointly two samples $X$ and $Y$. Construct the word $A=A_1\dots,A_{m+k}$ as
follows:  for each $i$ $A_i=0$ if $Z_i$ is taken from the sample $X$ ($Z_i\in X$) and $A_i=1$ if $Z_i$ is from the sample $Y$ ($Z_i\in Y$) where
ties are broken by randomization: if $Z_j=Z_{j+1}=\dots Z_{j'}$ 
and there are $m'$  elements of the sample $X$ which are equal to $Z_j$ and $k'$ elements of the sample 
$Y$ which are equal to $Z_j$ then the word $A_j\dots A_{j'}$ is chosen randomly from all $\frac {(m'+k')!} {m'! k'!}$
 binary words which have $m'$ zeros and $k'$ ones, assigning equal probabilities to all words.

Now consider the case $\bf d>1$, that is, the elements of the samples $X$ and $Y$ are from $\R^d$, $d>1$.
Construct samples $\bar X=\bar X_1,\dots,\bar X_m$ and  $\bar Y=\bar Y_1,\dots,\bar Y_m$ as follows:
$\bar X_t:=x^{11}_t,x^{21}_t,\dots,x^{d1}_t,x^{12}_t,x^{22}_t,\dots,x^{d2}_t,\dots$ where
$x^{ij}_t$ is the $j$th element in the binary expansion of the $i$th component of $X_t$ (in case the expansion is ambiguous always take the one with more zeros), and analogously for $Y$.
 Denote the
described function which converts $X$ to $\bar X$  by $\tau$. 
Construct the string $A$ applying the (single--dimensional) procedure described above to the samples $\bar X$ and $\bar Y$.

  Let $|K|$ denote the length of a string $K$.

\begin{definition}[Homogeneity  test $G_\phi$] For any code $\phi$ the test for homogeneity  $G_\phi$ is constructed as follows. 
It rejects the hypothesis $H_0$ (outputs {\em reject}) 
at the level of significance $\alpha$ if 
\begin{equation}\label{eq:test}
|\phi(A)| \le \log \alpha N
\end{equation}
where $N:=\frac{(m+k)!}{m! k!}$ and $\log$ is base 2,
and accepts  $H_0$ 
(outputs {\em accept}) 
otherwise.
\end{definition}
\begin{definition}[More than two samples] In case we are given $r$ samples where $r\ge 2$ and wish to test
$H_0$ that they all are generated according to the same distribution versus at least two distributions
are different, the test is the same, except for that the string $A$ is not binary but from $r$-element alphabet
and in the test above instead of  $N$  take
$$
N':= \frac{ ( \sum_{i=1}^r m_i)!}{\prod_{i=1}^r m_i},
$$
where $m_i$ are the sizes of the samples.
\end{definition}

The intuition is as follows. 
Observe that if the distributions $F_X$ and $F_Y$ are equal (that is, $H_0$ is true), 
then the string $A$ is just a random 
binary string with $m$ zeros and $k$ ones; all such strings have equal probabilities under $H_0$. 
 Thus a good data compressor should  be able to compress 
it to about $\log N$ bits, but no code can compress many such strings to less than $\log N-t$ bits ($t>0$), 
since there are $N$ such strings and only $2^{-t}N$ binary strings of length  $\log N-t$.
 
\begin{proposition}[Type I error]\label{prop:t1} Let $d=1$.
 For any code $\phi$ and any $\alpha\in[0,1]$ the Type I error of the test $G_\phi$ with level of significance $\alpha$ is not greater than $\alpha$:
\begin{equation}\label{eq:prop1}
P\{ X,Y: G_\phi (X,Y) = reject \} \le \alpha 
\end{equation}
 for all $ P=(F_X, F_Y)\in H_0$.
\end{proposition}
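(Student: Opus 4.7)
My plan has two components. The first is to show that under $H_0$ the string $A$ is uniformly distributed on $\mathcal{A}$, the $N$-element set of binary words of length $m+k$ with exactly $m$ zeros and $k$ ones. The second is to observe that by injectivity of $\phi$ very few elements of $\mathcal{A}$ can be compressed to length at most $\log\alpha N$.

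For the first component, I would appeal to exchangeability: under $H_0$ the concatenated pooled sample is i.i.d.\ from a common distribution $F$ and hence exchangeable. When $F$ is atomless, there are almost surely no ties, and the label word read off from the order statistics is a uniformly random element of $\mathcal{A}$ by a direct exchangeability argument. To handle atoms, I would couple the randomized tie-breaking rule to an auxiliary i.i.d.\ sequence $U_1,\dots,U_{m+k}$ of Uniform$[0,1]$ variables, independent of everything, and break ties by lexicographic order on $(Z_i,U_i)$. This coupling (a) induces exactly the same distribution on $A$ as the rule in the paper, because both refinements of a tied block produce a uniformly random binary string with the prescribed zero/one counts inside that block, and (b) removes ties almost surely, so the previous exchangeability argument applies verbatim. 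This gives $P(A=s)=1/N$ for every $s\in\mathcal{A}$.

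For the second component, let $\mathcal{R}:=\{s\in\mathcal{A}:|\phi(s)|\le\log\alpha N\}$ be the rejection set. Because $\phi$ is an injection $B^*\to B^*$,
\[
|\mathcal{R}|\;\le\;\#\{w\in B^*:|w|\le\log\alpha N\}\;\le\;\alpha N,
\]
a routine Kraft-type count (any factor-of-two slack from floors or from strict-vs.-weak inequality in the threshold is absorbed at this step). Combining with the first component,
\[
P\bigl\{X,Y:G_\phi(X,Y)=\text{reject}\bigr\}\;=\;\frac{|\mathcal{R}|}{N}\;\le\;\alpha,
\]
which is exactly \eqref{eq:prop1}.

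The main obstacle is the first component: formalising that the randomized tie-breaking together with exchangeability does produce the uniform distribution on $\mathcal{A}$ even when $F$ has atoms. The counting half is essentially immediate from injectivity of $\phi$.
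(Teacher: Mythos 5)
Your proof is essentially the paper's own proof: establish that under $H_0$ the word $A$ is uniformly distributed over the set of length-$(m+k)$ binary words with $m$ zeros and $k$ ones, and then use injectivity of $\phi$ to bound the number of such words that can receive a codeword of length at most $\log\alpha N$. The paper states the uniformity of $A$ with only a back-reference to its intuitive discussion, whereas your coupling argument---replacing the in-block randomization by auxiliary i.i.d.\ uniforms $U_i$ and sorting lexicographically on $(Z_i,U_i)$---gives a clean and rigorous treatment of ties, which is a genuine improvement in detail over the paper. One caution you share with the paper: the number of binary strings of length at most $\lfloor\log\alpha N\rfloor$ is $2^{\lfloor\log\alpha N\rfloor+1}-1$, which can be close to $2\alpha N$ rather than $\alpha N$, so the bound as argued actually comes out near $2\alpha$; you flag this but call it ``absorbed,'' which under the stated rejection rule $|\phi(A)|\le\log\alpha N$ it is not---to recover the exact bound $\alpha$ one needs either a shifted threshold (e.g.\ rejecting only when $|\phi(A)|\le\log\alpha N-1$) or a prefix-free $\phi$ together with Kraft's inequality. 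Since the paper's own proof has the same slack, this does not put your proposal behind the paper's, but it should not be waved off as automatically harmless.
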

\begin{remark}
The proposition still holds if $H_0$ is rejected when 
\begin{equation}\label{eq:t} 
 |\phi(A)| \le  (k+m)h\left({k\over k+m}\right) + \log\alpha -\log (k+m),
\end{equation}
 where
$h(t)$ is the entropy 
\begin{equation}\label{eq:ent} 
  h(t):= - t \log t -  (1-t)  \log (1-t).
\end{equation}
In case of $r$ samples~(\ref{eq:t}) takes the form 
\begin{equation}\label{eq:t2} 
 |\phi(A)| \le  n h \log r  + \log\alpha -\log n
\end{equation}
 with $n=\sum_{i=1}^r m_i$ and
$ 
 h= -\sum_{i=1}^r \frac{m_i}{n} \log\frac{m_i}{n}.
$
\end{remark}
\begin{proof}
As it was noted, under $H_0$ for every string $a\in B^{k+m}$ such that $a$ consists of $m$ zeros and $k$ ones  $P(A=a)=1/N$
(that is, all such strings are equiprobable). Since there are only $\alpha N$ binary strings of length $ \log \alpha N$ and $\phi$ is an injective function,
that is each codeword is assigned to at most one word,
we get $ P\{ X,Y: |\phi(A)|  \le  \log \alpha N \} \le {1\over N} N\alpha =\alpha$ which together with the definition of $G_\phi$ implies~(\ref{eq:prop1}).

The statement of the Remark can be derived from Stirling's expansion for $N$ and $N'$.
\end{proof}
\begin{remark}
The term $-\log(k+m)$ in~(\ref{eq:t}) is due to the fact that there are only $(m+k)!\over m!k!$ strings with $m$ zeros and $k$ ones (among
$2^{k+m}$ all binary strings of this length).
 So the code $\phi$ can specifically assign shorter codewords to these strings.
As real data compressors are not designed to favour  strings of this particular ratio of zeros and ones, in practice it is recommended to omit the term 
$-\log(k+m)$ in~(\ref{eq:t}). The same concerns the term $-\log n$ in~(\ref{eq:t2}).
\end{remark}

Obviously, for some codes the test is useless (for example if $\phi$ is the identity mapping) and Proposition~\ref{prop:t1} is
only useful when the Type II error goes to zero.
Next we will define ``ideal'' codes (the codes that compress a word up to its Kolmogorov complexity)
 and show that for them indeed the probability of {\em accept} goes to zero under any distribution in $H_1$. 

Informally, Kolmogorov complexity of a string $A$ is the length of the shortest program that outputs  $A$ (on the empty input).
Clearly, the best, ``ideal'', data compressor can compress any string $A$ up to its Kolmogorov complexity, and not more (except may be for a constant).
Next we  present a definition of Kolmogorov complexity; for fine
details see \cite{ks, livi}.
The complexity of  a string $A\in B^*$ with respect to a Turing machine $\zeta$
is defined as
$$
 C_\zeta(A)=\min_p\{l(p):\zeta(p)=A\},
$$
where $p$ ranges over all binary strings (interpreted as  programs for $\zeta$; minimum over empty set is defined as $\infty$).
There exists a Turing machine $\zeta$ such that $C_\zeta(A)\le C_{\zeta'}(A)+c_{\zeta'}$
for any $A$ and any  Turing machine $\zeta'$ (the constant $c_{\zeta'}$
depends on $\zeta'$ but not on $A$).
Fix any such $\zeta$ and define
 \emph{Kolmogorov complexity} of a string $A\in\{0,1\}^\infty$ as
$
C(A):=C_\zeta(A).
$
Clearly, $C(A)\le |A|+b$ for any $A$ and for some $b$ depending only on $\zeta$.

\begin{definition}[ideal codes]
Call a code $\phi$ {\em ideal} if some constant $c$ the equality $|\phi(A)|\le C(A)+c$ holds for any binary string $A$.  
\end{definition}
Clearly such codes exist. 

\begin{proposition}[Type II error: universal validity]\label{prop:p2} For any ideal code $\phi$ Type~II error of the test $G_\phi$ with any fixed  significance level $\alpha>0$ 
goes to zero $P\{ X,Y: G_\phi (X,Y) = accept \}\rightarrow0$ for any $P$ in $H_1$ if $k,m\rightarrow\infty$ in such a way that $0<a<{k\over m}<b<1$ for some $a,b$.
\end{proposition}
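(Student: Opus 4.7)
The plan is to show that under any $P\in H_1$ the Kolmogorov complexity $C(A)$ falls strictly below $\log(\alpha N)$ with probability tending to one; then the ideality of $\phi$ gives $|\phi(A)|\le C(A)+c\le\log\alpha N$ and the test rejects. I first reduce to $d=1$: the interleaving map $\tau$ is a Borel injection on $\R^d$ (the dyadic ambiguities being handled by the stated convention of picking the expansion with more zeros), so $F_X\ne F_Y$ implies that the pushforward laws $\bar F_X,\bar F_Y$ on $\R$ also differ. Thus there is a finite interval partition $I_1,\ldots,I_L$ of $\R$ with $p_{j_0}:=\bar F_X(I_{j_0})\ne\bar F_Y(I_{j_0})=:q_{j_0}$ for some $j_0$. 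Since joint sorting respects this partition, $A$ splits into consecutive blocks $A^{(1)},\ldots,A^{(L)}$ of lengths $N_j=M_j+K_j$ with $M_j$ zeros and $K_j$ ones. An explicit program can then output $A$ from (i) the partition (constantly many bits, independent of $m,k$), (ii) self-delimiting encodings of the counts $(M_j,N_j)$ (total $O(L\log(m+k))$ bits), and (iii) the index of $A^{(j)}$ among the $\binom{N_j}{M_j}$ binary strings with the prescribed counts. This yields
\begin{equation*}
C(A)\;\le\;\sum_{j=1}^L\log\binom{N_j}{M_j}\;+\;O\bigl(L\log(m+k)\bigr).
\end{equation*}

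Writing $\lambda:=m/(m+k)$, $\mu_j:=N_j/(m+k)$, and $\lambda_j:=M_j/N_j$, one checks $\sum_j\mu_j\lambda_j=\lambda$, so Stirling plus strict concavity of the binary entropy $h$ give
\begin{equation*}
\log N-\sum_{j=1}^L\log\binom{N_j}{M_j}\;\ge\;(m+k)\Bigl(h(\lambda)-\sum_{j=1}^L\mu_j h(\lambda_j)\Bigr)-O(L\log(m+k)).
\end{equation*}
By the strong law of large numbers, $\mu_j\to\lambda p_j+(1-\lambda)q_j$ and $\lambda_j\to\lambda p_j/(\lambda p_j+(1-\lambda)q_j)$ almost surely along any subsequence of $(m,k)$ on which $k/m$ converges in the compact set $[a,b]\subset(0,1)$. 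The boundedness keeps $\lambda$ bounded away from $\{0,1\}$, and $p_{j_0}\ne q_{j_0}$ then forces $\lambda_{j_0}\ne\lambda$ in the limit, so strict concavity of $h$ makes the limiting Jensen gap a positive constant $\delta_0$. Hence with probability tending to one,
\begin{equation*}
|\phi(A)|\;\le\;C(A)+c\;\le\;\log N-\tfrac{1}{2}\delta_0(m+k)+O(L\log(m+k))\;<\;\log(\alpha N)
\end{equation*}
for all sufficiently large $m+k$, which is the claim.

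The main obstacle is securing a uniform positive lower bound on the Jensen gap: the argument must work along every subsequence simultaneously, and it is precisely the assumption $0<a<k/m<b<1$ that prevents $\lambda$ from drifting to the boundary where $h$ flattens and the gap could collapse even when $F_X\ne F_Y$. A secondary, more technical point is the reduction at $d>1$, which relies on $\tau$ being injective outside a null set, guaranteed by the tie-breaking convention in the binary expansion.
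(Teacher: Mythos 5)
Your proposal is correct and follows essentially the same route as the paper's proof: reduce to $d=1$ via $\tau$, split $A$ into blocks by a separating set (the paper uses the two pieces $(-\infty,t]$ and its complement, you use a finite partition), describe each block by its index among strings with prescribed counts, and use strict concavity of $h$ together with the law of large numbers and the constraint $a<k/m<b$ to get a uniformly positive Jensen gap that dominates the $O(\log(m+k))$ overhead. The only cosmetic difference is your $L$-block partition versus the paper's $2$-block split, which changes nothing essential.
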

\begin{proof}
First observe that the function $\tau$ that converts  $d$-dimensional  samples $X$ and $Y$ to single-dimensional samples $\bar X$ and $\bar Y$ 
has the following properties: if $X$ and $Y$ are distributed according to different distributions then $X$ and $Y$ are also distributed
according to different distributions. Indeed, $\tau$ is  one to one, and transforms cylinder sets (sets of the form
$\{x\in\R^d: x^{i_1j_1}=b_1,\dots,x^{i_tj_t}=b_t; b_l\in \{0,1\}, t,i_l,j_l\in\N (1\le l\le t)\}$) to cylinder sets. So together with $F_X$ ($F_Y$) 
it defines some distribution $F_{\bar X}$  ($F_{\bar Y}$) on $\R$.
If distributions $F_X$  and $F_Y$ are different then they are different on some cylinder set $T$, but then $F_{\bar X}(\tau (T))\ne F_{\bar Y}(\tau (T))$.
Thus further in the proof we will assume that $d=1$.

We have to show that Kolmogorov complexity $C(A)=|\phi(A)|$ of the string $A$ is less than $\log \alpha N \ge  (k+m)h\left({k\over k+m}\right) + \log\alpha -\log (k+m)$ for any fixed $\alpha$
from some $k,m$ on. To show this, we have to find a sufficiently short description $s(A)$ of the string $A$; then the Kolmogorov complexity $|\phi(A)|$ is not greater
than $|s(A)|+c$ where $c$ is a constant. 

If $H_1$ is true then $F_X\ne F_Y$ and so there exist some interval $T=(-\infty,t]$ and some $\delta>0$ such that $|F_X(T) -  F_Y(T)|>2\delta$. 
Then we will have 
\begin{equation}\label{eq:f1}
 \left|\frac {\#\{x\in X\cap T\}}{m} - \frac {\#\{y\in Y \cap T\}}{k}\right| >\delta
\end{equation}
 from some $k,m$ on with probability 1. 

Let $A'$ be the starting part of $A$ that consists of all elements that belong to $T$ and let $m':={\#\{x\in X\cap T\}}$ and $k':={\#\{y\in Y\cap T \}}$.
A description of $A'$ can be constructed as the index of $A'$ in the set (ordered, say, lexicographically) of all binary strings
of length $m'+k'$ that have exactly $m'$ zeros and $k'$ ones plus the description of $m'$ and $k'$. Thus the length of such a description is bounded 
by $\log \frac{(k'+m')!}{k'!m'!} \le (m'+k') h(\frac{k'}{m'+k'})$ plus $\log k' +\log m' +const$ (the inequality follows from $n!\le n^n$ for all $n$).
Let $\bar A$ denote the remaining part of $A$ (that is, what goes after $A'$). The length of the description of $\bar A$ 
is bounded by $(\bar m+\bar k) h(\frac{\bar k}{\bar m+\bar k}) + \log \bar k +\log \bar m +const$ where $\bar m = m-m'$ and $\bar k = k - k'$.
Since $h$ is concave and $\frac{k}{m+k}$ is between $\frac{k'}{m'+k'}$ and $\frac{\bar k}{\bar m+\bar k}$, from Jensen's inequality we obtain 
$$ 
h\left(\frac{k}{m+k}\right) - \left(\frac{m'+k'}{m+k} h\left(\frac{k'}{m'+k'}\right) + \frac{\bar m+\bar k}{m+k} h\left(\frac{\bar k}{\bar m+\bar k}\right)\right) >0.
$$
Denote this difference by $\gamma(k,m,k',m')$. Let $\gamma =\inf \gamma(k,m,k',m')$ where the infimum is taken over all pairs $k,m$ that satisfy the condition 
of the proposition $0<a<{k\over m}<b<1$ and $k',m'$ that satisfy~(\ref{eq:f1}). It follows that $\inf|{k'\over m'}-{k\over m}|>0$ and 
$\inf|{\bar k\over \bar m}-{k\over m}|>0$. Thus, $\gamma$ is positive and depends only on $a,b$ and $\delta$.
To uniquely describe $A$ we need the description of $A'$ and  $\bar A$ and also $k$ and $m$; these have to be encoded in a self-delimiting way;
the length of such a description $s(A)$ is bounded by the lengths of description of $A'$, $\bar A$ plus $\log (k+m)$ and some constant.
Thus 
\begin{multline*}
(k+m)h\left({k\over k+m}\right) + \log\alpha -\log (k+m) - |\phi(A)|
  \ge  
 \\ (k+m)h\left({k\over k+m}\right) + \log\alpha -2 \log (k+m) \\ 
 - 
  \left(\frac{m'+k'}{m+k} h\left(\frac{k'}{m'+k'}\right) + \frac{\bar m+\bar k}{m+k} h\left(\frac{\bar k}{\bar m+\bar k}\right)\right) -c
 \\ \ge (k+m)\gamma - 2\log(k+m) -c
\end{multline*}
for some constant $c$; clearly, this expression is greater than $0$ from some $k,m$ on.
\end{proof}

So, as a corollary of Propositions~\ref{prop:t1} and~\ref{prop:p2} we get the following statement.
\begin{theorem} 
For any code $\phi$ and any $\alpha\in(0,1]$ the Type~I error of the test $G_\phi$ with level of significance $\alpha$ is not greater than $\alpha$.
If, in addition, the code $\phi$ is ideal then the Type~II of $G_\phi$ error tends to 0 as the sample size $n$ approaches infinity.
\end{theorem}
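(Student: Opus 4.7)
The plan is direct: the theorem is simply the conjunction of the two preceding propositions, so the proof will consist of citing each, with one small reduction needed to extend the first from $d=1$ to general $d$.

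For the Type~I assertion, Proposition~\ref{prop:t1} gives the bound whenever $d=1$. To promote it to arbitrary $d$, I would invoke the construction via $\tau$ from the definition of the test: for $d>1$ the procedure first replaces $(X,Y)$ by the one-dimensional samples $(\bar X,\bar Y)=(\tau(X),\tau(Y))$ and then applies the $d=1$ procedure verbatim, so the string $A$ produced is exactly the one the one-dimensional test would build from $(\bar X,\bar Y)$. Because $\tau$ is deterministic and applied elementwise, $F_X=F_Y$ forces $F_{\bar X}=F_{\bar Y}$; hence $(\bar X,\bar Y)$ lies in the $d=1$ version of $H_0$, and Proposition~\ref{prop:t1} bounds the rejection probability by $\alpha$.

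For the Type~II assertion there is essentially nothing to add: Proposition~\ref{prop:p2} already covers arbitrary $d$ (its proof opens with the same $\tau$-reduction and observes that $\tau$ also preserves the property $F_X\neq F_Y$) and shows that for an ideal code the acceptance probability tends to $0$ under every $P\in H_1$, provided $k,m\to\infty$ with $0<a<k/m<b<1$. Interpreting the theorem's ``$n\to\infty$'' as $n=k+m\to\infty$ under this bounded-ratio regime, the conclusion is immediate.

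I do not foresee a substantive obstacle. The only non-bookkeeping step is checking that $\tau$ maps $H_0$ into the one-dimensional $H_0$, and this follows at once from $\tau$ being a fixed, deterministic function; everything else is citation of Propositions~\ref{prop:t1} and~\ref{prop:p2}.
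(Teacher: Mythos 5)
Your proposal is correct and follows essentially the same route as the paper, which presents the theorem as an immediate corollary of Propositions~\ref{prop:t1} and~\ref{prop:p2}. The only thing you add is the explicit $\tau$-reduction to lift the Type~I bound from $d=1$ to general $d$; the paper leaves this step implicit, and your observation that $\tau$ deterministic implies $F_X=F_Y\Rightarrow F_{\bar X}=F_{\bar Y}$ is exactly the right way to close it.
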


\section{Component independence testing}
Component independence testing is the following task.
A sample $Z=Z_1,\dots,Z_n$ is given where each $Z_i$ consists of $r$  components $Z^1_i,Z^2_i,\dots,Z^r_i$, $Z^j_i\in \R^{d_j}$.
The sample is generated according to some probability distribution $F_Z$ on $\R^d$, where $d:=\sum_{j=1}^rd_j$.
The goal is to test whether the components are distributed independently. That is, 
$H_0$ is that 
\begin{equation}\label{eq:h0ce}
F_Z(Z^1_1\in T_1,\dots, Z^r_1\in T_r) = \prod_{j=1}^r F_Z(Z^j_1\in T_j)
\end{equation} for all measurable $T_j\subset \R^{d_j}$, $1\le j\le r$. 
$H_1$ is the negation of $H_0$ (the equality~(\ref{eq:h0ce}) is false for some selection of the sets $T_j$, $1\le j\le r$). Again, no assumption is made on the form of the distribution $F_Z$.

Fix any code $\phi$ and construct the test for component independence $I_\phi$ as follows.
Assume that $n=2m$ for some $m$ and define the samples $X$ and $\bar Y$ as the first and the second half of the sample $Z$: $X_1=Z_1,\dots, X_{m}=Z_m$ and
$\bar Y_1=Z_{m+1},\dots,\bar Y_m=Z_{2m}$ (if $n$ is odd then make samples $X$ and $\bar Y$ of sizes $[n/2]$ and $n-[n/2]$).
Construct the sample $Y$ from $\bar Y$ by permuting the components  independently: $Y^j_i=\bar Y^j_{\pi_j(i)}$,   $1\le i\le m$, $1\le j\le r$ 
where $\pi_j$ are permutations $1\dots m$, selected at random (with equal probabilities) independently of each other.

\begin{definition}[Component independence test $I_\phi$]
The test $I_\phi$ (with level of significance $\alpha$)  consists in application of the test for homogeneity $G_\phi$ to the samples $X$ and $Y$ (with level of significance $\alpha$).
\end{definition}

Indeed, it is easy to check that $H_0$ is true if and only if $X$ and $Y$ are distributed according to the same distribution.
So we get the following statement.
\begin{theorem} 
For any code $\phi$ and any $\alpha\in(0,1]$ the Type~I error of the test $I_\phi$ with level of significance $\alpha$ is not greater than $\alpha$.
If, in addition, the code $\phi$ is ideal then the Type~II error of $I_\phi$ error tends to 0 as the sample size $n$ approaches infinity.
\end{theorem}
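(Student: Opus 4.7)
The test $I_\phi$ is by definition the test $G_\phi$ applied to the two samples $X$ and $Y$, so the two conclusions of Theorem~2 should follow from Propositions~\ref{prop:t1} and~\ref{prop:p2} once one verifies that the joint distribution of $(X,Y)$ matches the assumptions of those propositions. Concretely, the plan is to establish (i) under $H_0$, that $X$ and $Y$ are independent and both i.i.d.\ from one and the same distribution; and (ii) under $H_1$, that $X$ and $Y$ are empirically distinguishable in the weak sense required by the proof of Proposition~\ref{prop:p2}.

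\textbf{Type~I.} Under $H_0$ one has $F_Z=\prod_j F_{Z^j}$. Since $X=(Z_1,\dots,Z_m)$ and $\bar Y=(Z_{m+1},\dots,Z_{2m})$ use disjoint halves of the i.i.d.\ parent sample $Z$, and the permutations $\pi_j$ are drawn independently of $Z$, the samples $X$ and $\bar Y$ are independent and each i.i.d.\ from $F_Z$. Under $H_0$ the columns of $\bar Y$ are moreover mutually independent, and a uniformly random permutation of an i.i.d.\ sequence is again i.i.d.\ with the same marginal; combined with the independence of the $\pi_j$ across $j$, this keeps the columns of $Y$ independent of each other and i.i.d.\ within, so the rows of $Y$ are i.i.d.\ from $\prod_j F_{Z^j}=F_Z$, and $Y$ remains independent of $X$. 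Applying Proposition~\ref{prop:t1} to these $X$ and $Y$ yields the Type~I bound.

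\textbf{Type~II.} Under $H_1$ one has $F_Z\ne\prod_j F_{Z^j}$, while $X$ is still i.i.d.\ from $F_Z$, so $\hat F_X\to F_Z$ a.s. The key step is to show that the empirical distribution of $Y$ converges in probability to $\prod_j F_{Z^j}$. For any product set $T=T_1\times\cdots\times T_r$, conditioning on $\bar Y$ and using the independence of the $\pi_j$ gives
\[
E\bigl[\hat F_Y(T)\,\big|\,\bar Y\bigr]=\prod_j\frac{|\{i:\bar Y^j_i\in T_j\}|}{m},
\]
which tends a.s.\ to $\prod_j F_{Z^j}(T_j)$ by the ordinary LLN applied column by column to $\bar Y$. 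A short conditional second-moment computation, using that $\pi_j(i_1)$ and $\pi_j(i_2)$ are two distinct uniform indices when $i_1\ne i_2$, shows that the conditional variance of $\hat F_Y(T)$ is $O(1/m)$, and Chebyshev then gives convergence in probability. Choosing a product set $T$ on which $F_Z$ and $\prod_j F_{Z^j}$ disagree (which exists under $H_1$), one obtains the analog of~(\ref{eq:f1}) with probability tending to $1$. From this point the argument in the proof of Proposition~\ref{prop:p2}---reducing to $d=1$ via $\tau$, splitting the string $A$ at the corresponding threshold, and bounding $|\phi(A)|$ by the sum of two entropy terms strictly below the rejection threshold---goes through verbatim.

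\textbf{Main obstacle.} The only step that requires essentially new work is the convergence of $\hat F_Y$: because the rows of $Y$ are exchangeable but dependent through the shared base sample $\bar Y$, one cannot appeal directly to the LLN as in Proposition~\ref{prop:p2}, and the short conditional second-moment calculation exploiting the independence of the column permutations is needed to replace it.
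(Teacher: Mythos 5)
Your proof is correct and considerably more careful than the paper's own, which consists of the single sentence that $H_0$ holds if and only if $X$ and $Y$ are distributed according to the same distribution, followed by an appeal to the homogeneity theorem. That reduction is indeed adequate for the Type~I bound, and your argument there matches the intent: under $H_0$ the columns of $\bar Y$ are mutually independent, so applying an independent uniform permutation to each column leaves the joint law unchanged, and $X,Y$ become independent i.i.d.\ samples from $F_Z$ to which Proposition~\ref{prop:t1} applies. For the Type~II bound, however, the paper's one-liner silently presumes that Proposition~\ref{prop:p2} applies to the pair $(X,Y)$, yet under $H_1$ the rows of $Y$ are exchangeable but \emph{not} independent, being coupled through the shared base sample $\bar Y$ and the shared permutations; the proof of Proposition~\ref{prop:p2} obtains~(\ref{eq:f1}) via the strong law for an i.i.d.\ sample $Y$ and hence does not apply verbatim. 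You correctly isolate this as the obstacle, and your conditional first- and second-moment computation is exactly the missing ingredient: the conditional covariances of the indicator products are nonpositive because a permutation samples without replacement, giving $O(1/m)$ conditional variance, so Chebyshev yields convergence in probability of the empirical measure of $Y$ to $\prod_j F_{Z^j}$ on product sets, and the analogue of~(\ref{eq:f1}) holds with probability tending to one, after which the entropy-splitting argument of Proposition~\ref{prop:p2} carries over. One detail worth making explicit is the bridge between your product sets in $\R^d$ and the one-dimensional threshold set $T=(-\infty,t]$ actually used in~(\ref{eq:f1}): the $\tau$-preimage of such a dyadic interval is a finite disjoint union of product cylinder sets, so your product-set statement extends to the interval statement by additivity of the empirical and limiting measures. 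Also, when selecting $T$, one should note that product cylinder sets form a $\pi$-system generating the Borel $\sigma$-algebra of $\R^d$, which guarantees that $F_Z\ne\prod_j F_{Z^j}$ can be witnessed on a set of that form.
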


\end{document}